\documentclass[journal]{IEEEtran}
\usepackage{amssymb}
\usepackage{mathrsfs}
\usepackage{graphicx}
\usepackage{amsmath, amssymb, amsthm}
\usepackage{leftidx}
\usepackage{extarrows}
\usepackage{stfloats}
\usepackage{picinpar}
\usepackage{enumerate}
\usepackage{algorithm}
\usepackage{algorithmicx}
\usepackage{algpseudocode}
\usepackage{epsfig}
\usepackage{latexsym}
\usepackage{amsfonts}
\usepackage{enumerate}
\usepackage{graphics}
\usepackage{graphicx,subfigure}
\usepackage{MnSymbol}
\usepackage{float}
\usepackage{pict2e}
\usepackage{tikz}
\usepackage{bm}
\newtheorem{theorem}{Theorem}
\newtheorem{assumption}{Assumption}

\newtheorem{lemma}{Lemma}
\newtheorem{remark}{Remark}
\newtheorem{definition}{Definition}

\title{Sensors Design for Large-Scale Boolean Networks via Pinning Observability}

\author{Shiyong~Zhu, Jianquan Lu$^\ast$, Jie Zhong, Yang Liu, and Jinde Cao 
\thanks{\textcolor[rgb]{0,0,1}{The revised version of this work has been accepted by IEEE Transactions on Automatic Control, DOI: 10.1109/TAC.2021.3110165.}}
\thanks{Shiyong Zhu is with the College of Mathematics and Computer Science, Zhejiang Normal University, Jinhua 321004, China, and also with the Department of Systems Science, School of Mathematics, Southeast University, Nanjing 210096, China (email: zhusy0904@gmail.com).}
\thanks{Jianquan Lu is with the Department of Systems Science, School of Mathematics, Southeast University, Nanjing 210096, China (email: jqluma@seu.edu.cn).}
\thanks{Jie Zhong and Yang Liu are with the College of Mathematics and Computer Science, Zhejiang Normal University, Jinhua 321004, China (e-mail: zhongjie0615@gmail.com;liuyang@zjnu.edu.cn).}
\thanks{Jinde Cao is with the School of Mathematics, Frontiers Science Center for Mobile Information Communication and Security, Southeast University, Nanjing 210096, China, with the Purple Mountain Laboratories, Nanjing 211111, China, and also with the Yonsei Frontier Lab, Yonsei University, Seoul 03722, South Korea. (e-mail: jdcao@seu.edu.cn).}
}

\begin{document}
\maketitle
\thispagestyle{empty}
\pagestyle{empty}
\begin{abstract}
  In this paper, a set of sensors is constructed via the pinning observability approach with the help of observability criteria given in \cite{Margaliot2019TAC2727} and \cite{Margaliot2019IEEECSL210}, in order to make the given Boolean network (BN) be observable. Given the assumption that system states can be accessible, an efficient pinning control scheme is developed to generate an observable BN by adjusting the network structure rather than just to check system observability. Accordingly, the sensors are constructed, of which the form is consistent with that of state feedback controllers in the designed pinning control. Since this pinning control approach only utilizes node-to-node message communication instead of global state space information, the time complexity is dramatically reduced from $O(2^{2n})$ to $O(n^2+n2^d)$, where where $n$ and $d$ are respectively the node number of the considered BN and the largest in-degree of vertices in its network structure. Finally, we design the sensors for the reduced D. melanogaster segmentation polarity gene network and the T-cell receptor kinetics, respectively.
\end{abstract}

\begin{IEEEkeywords}
Sensors design, large-scale Boolean networks, observability, complexity reduction, semi-tensor product of matrices.
\end{IEEEkeywords}

\section{Introduction}\label{section-introduction}
Boolean networks (BNs) can be regarded as a classical type of discrete-time dynamical systems with binary nodal states, where systems state updates in the light of several pre-assigned logical functions \cite{kauffman1969jtb437}. Even if the expression of a BN is simple, its applications have been congruously recognized in numerous fields, such as biology engineering, power systems, circuit systems, and so forth.

In 2009, Cheng {\em et al.} proposed the algebraic state space representation (ASSR) of BNs based on semi-tensor product (STP) of matrices (cf. \cite{chengdz2011springer}), then it uplifted the scientific stream on this area to a new height. With the help of ASSR approach, various interesting works on BNs have taken place up to now, including but not limited to, controllability and observability \cite{margaliot2012aut1218,lujq2016ieeetac1658,mengmin2019auto70,zhangkz2020tac,zhangzh2020tac,zhusy2021tac,liht2020siam3632}, stability and stabilization \cite{valcher2015aut21,lir2013ieeetac1853,huangc2021tac,zhaogd2020iet2566,zhusy2019tac}, state estimation \cite{chenhw2020tac}, optimal control \cite{wuyh2018tac262,valcher2013tac1258}, detectability \cite{wangb2020scl104783} and synchronization \cite{linlin2020JFI}. Unfortunately, ASSR method runs in exponential time with respect to the node number $n$ of the considered BN, because this approach is established on the $2^n\times2^n$-dimensional network transition matrix $L$. Such high computational complexity limits almost works built on the ASSR method to BNs with a mass of nodes, though many problems of BNs have been proven to be NP-hard and the obtained results are theoretically prefect. However, large-scale BNs are universal in practice, thus many researchers begun to focus on reducing the time complexity of the existing approaches (see, e.g., \cite{liht2021tac,zhangkz2020tac,zhongjie2019new}).

It is of both theoretical interest and practical significance to study system observability. Since sensors cannot be imposed on all nodes specially for BNs with a mass of nodes, while the system state is not directly known, we are interested in using the output observations to fully infer the initial system state; this further permits us to reconstruct the entire state trajectory that is compatible with the output observations. As formally stated, a BN with such ability is said to be observable. Additionally, an observable system is also the precondition of designing state feedback controllers \cite{lir2013ieeetac1853} and its eliciting problems like optimal control \cite{wuyh2018tac262,valcher2013tac1258}. In the remarkable paper \cite{Margaliot2013Observability2351}, Laschov {\em et al.} proved that checking the observability of BNs is an $\text{NP}$-hard problem, and proposed the T-initial condition observable graph to characterize the observability of BNs on the basis of the observable graphs established in \cite{jungers2011observable}. To date, many different approaches have emerged to study the observability of BNs and Boolean control networks, such as matrix testing approach \cite{Valcher2012TAC1390}, automata language approach \cite{zhangkz2015IEEETAC2733}, and set reachability/controllability method \cite{chengdz2018scl22,guoyq2018tnnls6402}. However, the time complexity of the existing methods remains $O(2^{2n})$ at least due to the utilization of ASSR approach. As a result, it was also stressed in \cite{zhangkz2020tac} that many existing approaches built on the ASSR framework generally cannot address the observability of a BN with more than approximately $30$ nodes in a reasonable amount of time in practice. Thus, the efficient approach to analyze the observability of large-scale BNs is urgently necessary, and this constitutes the primary purpose of this article. Most recently, some complexity reduction techniques on the observability of BNs (including, e.g. \cite{Margaliot2019TAC2727}, \cite{Margaliot2019IEEECSL210}) have been established. In \cite{Margaliot2019TAC2727}, based on the one-to-one dependence between any conjunctive BN and network structure, Weiss and Margaliot proposed a polynomial-time algorithm to test the observability of conjunctive BNs and, for which, optimality with respect to the number of sensors needed for the observability was further attained. While following the similar spirit, they also found that the network-structure-based criteria in \cite{Margaliot2019TAC2727} suffice to ensure the observability of general BNs \cite{Margaliot2019IEEECSL210}, but only the suboptimality can be achieved. Although the developed conditions in \cite{Margaliot2019IEEECSL210} are only sufficient, it potentially motivates us to design efficient observers via the pinning observability viewpoint that will be elaborated in this article, because distributed pinning controllers have been proposed to modify the network structures of BNs in \cite{zhongjie2019new}.

When a BN is not observable, a natural problem is how to add sensors to make the considered system be observable. Even more, it was stressed in \cite{mathsystemtheory} that system reconstructibility is necessary for the existence of fully-ordered state observers in a linear time-invariant system and this claim is also true for BNs. Note that many efforts have been made to design distinct types of observers for BNs under the assumption that the considered BNs/BCNs are reconstructible \cite{Valcher2012TAC1390,zhangzh2020tac,yangjq2020tcns}, whereas it is evident that not all systems are reconstructible, in which case, adding sensors to make a BN be reconstructible or observable is meaningful. Besides, the methods in \cite{Valcher2012TAC1390,zhangzh2020tac,yangjq2020tcns} are inevitable to encounter the $2^n \times 2^n$-dimensional state space, which eventually gives rise to the exponential time complexity. Noticing that the results in \cite{Margaliot2019TAC2727} and \cite{Margaliot2019IEEECSL210} are polynomial-time, along this line, we attempt to propose a new sensors design method from the aspect of pinning observability, which as indeed is an alternative of sensors design and is an application of observability criteria proposed in \cite{Margaliot2019TAC2727} and \cite{Margaliot2019IEEECSL210}. In the area of BNs, the first attempt to design pinning controllers was made in \cite{Liff2016TNNLS1585} for sake of stabilizing a BN at an appointed equilibrium point. However, except for the exponential time complexity, traditional pinning approaches \cite{Liff2016TNNLS1585,liff2018tnnls} is also difficult to identify pinned nodes. Therefore, here we utilize the pinning controllers designed in \cite{zhongjie2019new} in which a part of controllers is also built on the network structures.

In this article, in order to design the sensors making the BNs observable, we start with supposing that the state of arbitrary node is accessible. Then, a network-structure-based pinning control strategy is proposed. Based on Properties $P_1$ and $P_2$ in \cite{Margaliot2019TAC2727} and \cite{Margaliot2019IEEECSL210}, a polynomial-time algorithm is established to identify the pinned nodes and to print $p$ desired observed paths, where $p$ is the number of outputs in the original system. Subsequently, the state feedback gain imposed on each pinned node is designed to modify the adjacency relation of network structure. During the whole procedure, the dimensions of addressed matrices are bounded by $2\times 2^{d+1}$ instead of $2^n \times 2^n$, where $d$ is the largest in-degree of vertices in network structure. Finally, we construct the feasible sensors based on the designed pinning controller; together with original outputs, the initial state of the considered BN can be inferred in accordance with the output observations. In summary, the contributions of this paper are concluded from three aspects:
\begin{itemize}
  \item[(1)] We do not utilize the $2^n\times 2^n$-dimensional network transition matrix $L$ in this paper. Compared with the approaches in \cite{Valcher2012TAC1390,zhangzh2020tac,yangjq2020tcns}, by utilizing the observability criteria obtained in \cite{Margaliot2019TAC2727} and \cite{Margaliot2019IEEECSL210}, the time complexity of sensors design is reduced from $O(2^{2n})$ to $O(n^2+n2^d)$. Specially for the sparsely connected large-scale biological networks (i.e., $n\gg d$), it follows that $O(n^2+n2^{d}) \ll O(2^{2n})$.
  \item[(2)] In comparison with the traditional pinning control approach (see, e.g., \cite{Liff2016TNNLS1585,liff2018tnnls}), our approach can determine the pinned nodes in time $O(n^2)$. Moreover, it does overcome the drawback of the traditional approaches where pinned node set may contain all network nodes.
  \item[(3)] Different from the observers design in \cite{Valcher2012TAC1390,zhangzh2020tac,yangjq2020tcns}, we construct the sensors from the viewpoint of pinning observability to make the considered system be observable.
\end{itemize}

The remainder of this article is organized as follows. Section \ref{section-preliminary} presents some preliminaries, and Section \ref{section-problemformulation} describes the considered problem. In Section \ref{section-openov}, a network-structure-based pinning control strategy is developed to enforce a unobservable BN to be observable. Moreover, Section \ref{section-sensors} designs the sensors to estimate the initial state of the considered BN. Two biological examples are addressed in Section \ref{section-example}. Finally, Section \ref{section-conclusion} concludes this article.

\section{Preliminaries}\label{section-preliminary}
Throughout this article, the following notations will be utilized to facilitate the readers. Let $\mathbb{R}$ and $\mathbb{N}$ represent the sets of real numbers and positive integers, respectively. Specially, $\mathscr{B}:=\{1,0\}$. $\mathbb{R}^{m\times n}$ stands for the set of all $m\times n$-dimensional real matrices. Given any integers $i,j\in\mathbb{N}$ with $i<j$, we denote by $[i,j]$ the set $\{i,i+1,\cdots,j\}$. The $n\times n$-dimensional identity matrix is denoted by $I_n$, and its $i$-th column is denoted as $\delta_n^i$. Letting $\Delta_n:=\{\delta_n^i\mid i\in[1,n]\}$, a matrix $A \in \mathbb{R}^{m\times n}$ is called a logical matrix if every column therein, denoted by $\text{Col}_i(A)$, lies in set $\Delta_m$. All $m\times n$-dimensional logical matrices are collected by $\mathscr{L}_{m\times n}$, and every logical matrix $[\delta_{m}^{i_1},\delta_{m}^{i_2},\cdots,\delta_m^{i_n}]$ therein can be briefly written as $\delta_{m}[i_1,i_2,\cdots,i_n]$. Given a matrix $A$, matrix $A^\top$ captures its transpose. Given a set $B$, $\mid B \mid$ describes its cardinality. ${\bf 1}_{n}$ is the $n$-dimensional column vector with all entries being $1$, i.e., ${\bf 1}_{n}:=\sum_{i=1}^{n}\delta_n^i$. Logical operators $\neg$, $\vee$, $\wedge$, $\bar{\vee}$ and $\leftrightarrow$ are respectively negation, disjunction, conjunction, exclusion, and bicondition.

\begin{definition}[See \cite{chengdz2011springer}]
Given matrices $U\in\mathbb{R}^{m\times n}$ and $V\in\mathbb{R}^{p\times q}$, the STP of matrices $U$ and $V$ is defined as
$U\ltimes V:=(U\otimes I_{\phi(n,p)/n})(V\otimes I_{\phi(n,p)/p})$,
where ``$\otimes$'' is the tensor product, and $\phi(n,p)$ is denoted as the least common multiple of integers $n$ and $p$. In particular, if matrices $U$ and $V$ satisfy $n=p$, their STP becomes conventional matrix product.
\end{definition}

Then, several properties of STP of matrices are briefly introduced. We refer the readers to \cite{chengdz2011springer} and references therein for a survey of STP of matrices and the multilinear form of logical functions.
\begin{lemma}[See \cite{chengdz2011springer}]\label{lemma-swapmatrix}
The STP of matrices has the following useful properties:
\begin{itemize}
  \item[(1)] Given $u\in\mathbb{R}^{m\times 1}$ and $v\in \mathbb{R}^{n\times 1}$, then $u \ltimes v= \mathrm{W}_{[n,m]}\ltimes {\bm v}\ltimes {\bm u}$ holds, where $\mathrm{W}_{[n,m]}$ is a swap matrix defined as
      $\mathrm{W}_{[n,m]}:=[I_m\otimes\delta_n^1,I_m\otimes\delta_n^2,\cdots,I_m\otimes \delta_n^n]$.
  \item[(2)] Given $u\in\mathbb{R}^{m\times 1}$ and $A\in\mathbb{R}^{p\times q}$, then $u \ltimes A= (I_{m}\otimes A)\ltimes u$ holds;

  \item[(3)] Given $M_{r,n}:=[\delta_{n}^1\otimes\delta_{n}^1~\delta_{n}^2\otimes\delta_{n}^2~\cdots~\delta_{n}^{n}\otimes\delta_{n}^{n}]$, then $\eta\ltimes\eta=M_{r,n}\eta$ holds for every $\eta\in\Delta_{n}$;

  \item[(4)] Given $M_{d,n}:={\bf 1}^\top_n \otimes I_n$, then $v=M_{d,n}\ltimes u \ltimes v$ holds.
\end{itemize}
\end{lemma}

Defining a bijection $\varrho:\mathscr{B}\rightarrow\Delta_2$ as $x_i:=\varrho({\bm x}_i):=\delta_2^{2-{\bm x}_i}$, the one-to-one correspondence can be given as $\zeta({\bm x})=x:=\ltimes_{i=1}^n x_i$, for all ${\bm x}=({\bm x}_1,{\bm x}_2,\cdots,{\bm x}_n)\in\mathscr{B}^n$. Accordingly, arbitrary logical function ${\bm f}({\bm x}_1,{\bm x}_2,\cdots,{\bm x}_n):\mathscr{B}^n\rightarrow\mathscr{B}$ can be equivalently expressed in multi-linear form $f(x_1,x_2,\cdots,x_n):(\Delta_2)^n\rightarrow\Delta_2$.
\begin{lemma}[See \cite{chengdz2011springer}]\label{lemma-structurematrix}
Given function ${\bm f}({\bm x}_1,{\bm x}_2,\cdots,{\bm x}_n):\mathscr{B}^n\rightarrow \mathscr{B}$, there exists a unique logical matrix $L_{\bm f}\in\mathscr{L}_{2\times 2^n}$ such that
\begin{equation}\label{equation-multilinear}
f(x_1,x_2,\cdots,x_n)=L_{\bm f} x,
\end{equation}
where matrix $L_{\bm f}\in\mathscr{L}_{2\times 2^n}$ is called the structure matrix of logical function ${\bm f}$, and equation (\ref{equation-multilinear}) is called the multilinear form of ${\bm f}$.
\end{lemma}

Next, the functional variables of a logical function are defined.
\begin{definition}\label{definition-functional}
Given a logical function ${\bm f}({\bm x}_1,{\bm x}_2,\cdots,{\bm x}_n):\mathscr{B}^n\rightarrow\mathscr{B}$, variable ${\bm x}_\kappa$ is called functional if ${\bm f}({\bm x})\neq {\bm f}({\bm x}+(\delta_n^\kappa)^\top)$ holds for some ${\bm x}\in\mathscr{B}^n$ with ${\bm x}_\kappa=0$. Otherwise, variable ${\bm x}_\kappa$ is said to be nonfunctional.
\end{definition}

Based on the multilinear form $f$, one can present the equivalent algebraic description of Definition \ref{definition-functional}.
\begin{lemma}
Given a logical function ${\bm f}({\bm x}_1,{\bm x}_2,\cdots,{\bm x}_n): \mathscr{B}^n\rightarrow \mathscr{B}$, state variable ${\bm x}_\kappa$ is nonfunctional if, for all $j\in[1,2^{n-1}]$, it holds that $\text{Col}_j(L_{\bm f} \mathrm{W}_{[2,2^{\kappa-1}]})=\text{Col}_{j+2^{n-1}}(L_{\bm f} \mathrm{W}_{[2,2^{\kappa-1}]})$.
\end{lemma}
\begin{proof}
Using bijection $\zeta$ and STP of matrices, if state variable ${\bm x}_\kappa$ is nonfunctional, then
\begin{equation}\label{equation-equal}\begin{aligned}
L_{\bm f} \ltimes_{i=1}^{n} x_i&=L_{\bm f}\mathrm{W}_{[2,2^{\kappa-1}]} x_\kappa \ltimes (\ltimes_{i=1}^{\kappa-1}x_{i}) \ltimes (\ltimes_{i=\kappa+1}^{n}x_{i})\\
&=L_{\bm f} \mathrm{W}_{[2,2^{\kappa-1}]} ({\bf 1}_2-x_\kappa) \ltimes (\ltimes_{i=1}^{\kappa-1}x_{i}) \ltimes (\ltimes_{i=\kappa+1}^{n}x_{i})\\
\end{aligned}\end{equation}
holds for any $x_i\in\Delta_2$, $i\in[1,n]$. Letting $(\ltimes_{i=1}^{\kappa-1}x_{i}) \ltimes (\ltimes_{i=\kappa+1}^{n}x_{i}):=\delta_{2^{n-1}}^j$, it is claimed from equation (\ref{equation-equal}) that
$$L_{\bm f} \mathrm{W}_{[2,2^{\kappa-1}]} \delta_2^1\delta_{2^{n-1}}^j = L_{\bm f}\mathrm{W}_{[2,2^{\kappa-1}]} \delta_2^2 \delta_{2^{n-1}}^j.$$

Since $\delta_2^1\ltimes\delta_{2^{n-1}}^j=\delta_{2^n}^j$ and $\delta_2^2\ltimes\delta_{2^{n-1}}^j=\delta_{2^n}^{2^{n-1}+j}$ hold for all $j\in[1,2^{n-1}]$, one can conclude that
$$\text{Col}_j(L_{\bm f} \mathrm{W}_{[2,2^{\kappa-1}]})=\text{Col}_{j+2^{n-1}}(L_{\bm f} \mathrm{W}_{[2,2^{\kappa-1}]}).$$
\end{proof}

Finally, we give some basic concepts of graph theory. A digraph is denoted by an ordered pair $\mathrm{G}:=(\mathrm{V},\mathrm{E})$, where $\mathrm{V}:=\{v_1,v_2,\cdots,v_n\}$ is the vertex set and $\mathrm{E}$ means the set of directed edges. More precisely, $e_{ij}=(v_i,v_j)\in \mathrm{E}$ is an oriented edge drawn from vertex $v_i$ to vertex $v_j$. For every $e_{ij}:=(v_i,v_j)\in \mathrm{E}$, edge $e_{ij}$ is said to joint vertices $v_i$ to $v_j$, that is, $v_i$ dominates $v_j$. For vertex $v_i$, all vertices that dominate (resp., are dominated by) it are said to be its in-neighbors (resp., out-neighbors). which are collected by set $I(i)$ (resp., $O(i)$). An ordered vertex sequence $v_{i_0}v_{i_1}\cdots v_{i_s}$ is a walk in a digraph $\mathrm{G}$ from $v_{i_0}$ to $v_{i_s}$ if $e_{i_t i_{t+1}}\in \mathrm{E}$ for all $t\in[0,s-1]$. Moreover, it becomes a path if vertices therein are pairwise different. A path becomes a cycle if $i_0=i_s$.

\section{Problem Formulation and Research Framework}\label{section-problemformulation}
\subsection{Problem Formulation}
Consider the following BN with $n$ state nodes and $p$ output nodes:
\begin{equation}\label{equation-BN}
\left\{
\begin{aligned}
&{\bm x}_i(t+1)={\bm f}_i({\bm x}_{i_1}(t),{\bm x}_{i_2}(t),\cdots,{\bm x}_{i_{k_i}}(t)), ~i\in [1,n],\\
&{\bm y}_j(t)={\bm x}_j(t), ~j\in[1,p],
\end{aligned}
\right.
\end{equation}
where ${\bm x}_i(t)\in \mathscr{B}$ and ${\bm y}_j(t)\in\mathscr{B}$ stand for system state variables and output variables, respectively. Without loss of generality, the first $p$ state variables are assumed to be directly observable, or alternatively variables ${\bm x}_1,{\bm x}_2,\cdots,{\bm x}_p$ are termed directly observable variables. In this article, ${\bm y}_j(t)={\bm x}_j(t)$, $j\in[1,p]$, are also called the original sensors of BN (\ref{equation-BN}). It should be stressed that functions ${\bm f}_i$ and ${\bm h}_j$ are all minimally represented. For BN (\ref{equation-BN}), our purpose is to design sensors in the following form such that, together with the designed sensors, BN (\ref{equation-BN}) will be observable:
\begin{equation}\label{equation-observer}
\tilde{{\bm y}}_j(t)={\bm h}_j({\bm x}_{1}(t),{\bm x}_{2}(t),\cdots,{\bm x}_{n}(t)), ~j\in[1,\tilde{p}],
\end{equation}
where integer $\tilde{p}$ and logical functions ${\bm h}_j$, $j\in[1,\tilde{p}]$, are the configurations that need to be designed.

In particular, if BN (\ref{equation-BN}) has been observable under the given sensors, the original sensors have been enough to ensure the system observability. In this case, the result is trivial, thus BN (\ref{equation-BN}) is assumed to be unobservable.
\begin{definition}\label{definition-observability}
BN (\ref{equation-BN}) is called observable if, for arbitrary distinct initial state pair ${\bm x}_0\in\mathscr{B}^n$ and ${\bm x}'_0\in\mathscr{B}^n$, there exists a time instant $\tau$ satisfying ${\bm y}({\bm x}_0,\tau) \neq {\bm y}({\bm x}'_0,\tau)$, where ${\bm y}({\bm x}_0,t)$ and ${\bm y}({\bm x}'_0,t)$ are the output sequences of BN (\ref{equation-BN}) with initial states ${\bm x}_0$ and ${\bm x}'_0$ respectively.
\end{definition}
\begin{assumption}\label{assumption-bn-is-unobservale}
BN (\ref{equation-BN}) is assumed to be unobservable.
\end{assumption}
\begin{remark}
In this article, we only address BN (\ref{equation-BN}) with directly observable outputs. For BNs with general output functions, one can follow the same extension technique as in [\cite{Margaliot2019TAC2727}, Sec. V.B] to convert them into the form of BN (\ref{equation-BN}).
\end{remark}

\subsection{Research Framework}
In \cite{Margaliot2019TAC2727} and \cite{Margaliot2019IEEECSL210}, several sufficient criteria were developed to judge the observability of BNs from the viewpoint of network structures. In the following, we briefly reviews some related concepts and lemmas.

Given BN (\ref{equation-BN}), by defining $\mathrm{V}=\{{v}_1,{v}_2,\cdots,{v}_{{n}}\}$ as the vertex set, its {\em network structure} can be described as an ordered pair $\mathrm{G}:=(\mathrm{V},\mathrm{E})$, where vertex $v_i$ corresponds to node $i$ of BN (\ref{equation-BN}), and $e_{ij} \in \mathrm{E}$ if and only if ${\bm x}_i$ is a functional variable of ${\bm f}_j$. Specially, since the states of vertices $v_i$, $i\in[1,p]$, are directly observed, we also term them as directly observable vertices.

\begin{definition}[See \cite{Margaliot2019TAC2727,Margaliot2019IEEECSL210}]\label{definition-observedpath}
Given network structure $\mathrm{G}$, a path $P:{v}_{i_0}{v}_{i_1}\cdots {v}_{i_s}$ is said to be an observed path if, vertex ${v}_{i_s}$ is the unique directly observable vertex on path $P$ and vertex ${v}_{i_k}$ is the only in-neighbor of ${v}_{i_{k+1}}$ for each $k\in[0,s-1]$.
\end{definition}

\begin{definition}[See \cite{Margaliot2019TAC2727,Margaliot2019IEEECSL210}]
BN (\ref{equation-BN}) is said to have Property $P_1$ if, for every non directly observable vertex ${v}_i$ in its network structure $\mathrm{G}$, there exists a vertex ${v}_j \neq v_i$ with $I(j)=\{v_i\}$. BN (\ref{equation-BN}) is said to have Property $P_2$ if, for each cycle $C$ entirely composed of non directly observable vertices, there exists a vertex ${v}_i\in C$ such that ${v}_i$ is the unique in-neighbor of a vertex ${v}_j\not\in C$.
\end{definition}

\begin{lemma}[See \cite{Margaliot2019TAC2727,Margaliot2019IEEECSL210}]\label{lemma-basictheorem}
BN (\ref{equation-BN}) is observable if it has Properties $P_1$ and $P_2$; or alternatively, its network structure $\mathrm{G}$ can be decomposed into several disjoint observed paths.
\end{lemma}

Lemma \ref{lemma-basictheorem} provides an observability criterion from the viewpoint of network structure $\mathrm{G}$, which only contains $n$ vertices. In the following, we shall sketch that this sufficient condition is useful for the sensors construction in a high-level description, and the detailed proof will be elaborated in Section \ref{section-sensors}. Thus, the results in this article can be regarded as an application of those in \cite{Margaliot2019TAC2727} and \cite{Margaliot2019IEEECSL210}. Observing Definition \ref{definition-observedpath} and Lemma \ref{lemma-basictheorem}, if a vertex $v_i$ has an out-neighbor $v_j$ satisfying $I(j)=\{i\}$ in network structure $\mathrm{G}$, then state variable ${\bm x}_i(\kappa)$ at time instant $\kappa$ will be passed to variable ${\bm x}_j(\kappa+1)$ at the next time instant, and ${\bm x}_j(\kappa+1)$ can also inversely determine variable ${\bm x}_i(\kappa)$, because both of logical mappings ${\bm x}_j(t+1)={\bm x}_i(t)$ and ${\bm x}_j(t+1)=\overline{{\bm x}_i(t)}:=\neg{\bm x}_i(t)$ are bijective. Thus, while there exists an observed path stemming from vertex $v_i$ and terminating at a directly observable vertex $v_{\sigma_i}$ as in the left subgraph of Fig. \ref{fig-ideagraph}, the value of variable ${\bm x}_i$ at any time instant is inferable even if it is not directly observable. However, if each out-neighbor $v_j$ of vertex $v_i$ does not satisfy $I(j)=\{i\}$ but vertices on the path from $v_j$ to $v_{\sigma_i}$ are eligible as in the right subgraph of Fig. \ref{fig-ideagraph}, one can design sensors ${\bm u}_j={\bm g}_j({\bm x}_{j_1},{\bm x}_{j_2},\cdots,{\bm x}_{j_{k_j}})$; together with the state of vertex $v_j$ inferred by ${\bm y}_{\sigma_i}$, these sensors can determine the value of ${\bm x}_i$. In other words, utilizing some logical speculation $\oplus_j$ as ${\bm u}_j \oplus_j {\bm f}_j$ can imply the value of ${\bm x}_i$.
\begin{figure}[!ht]
\centering
\includegraphics[width=0.2\textwidth=0.4]{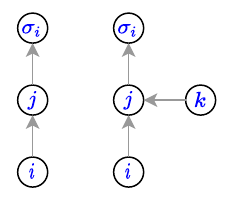}
\caption{Two different cases of state passing from node $i$ to directly observable node $\sigma_i$ are presented, where the left subgraph describes the situation where vertex $v_i$ has an out-neighbor $v_j$ satisfying $I(j)=\{i\}$ in network structure $\mathrm{G}$ while the right one captures the remainder scene. }\label{fig-ideagraph}
\end{figure}

By regarding the vertex corresponding to pinned node $j$, logical function ${\bm g}_j$ and logical speculation $\oplus_j$ as a pinned node, the feedback control gain and the logical coupling, it is given in the same form as pinning control, even if we know that not all state variables are accessible when a BN is not observable. Fortunately, for the sensors construction, we are admissible to obtain some extra state observations. Keep it in mind, the following research starts with establishing a novel pinning control strategy to make an arbitrary unobservable BN (4) be observable by the assumption that state variables are fully accessible, and proceed to use the obtained controllers to construct efficient sensors (5).

\section{An Improved Pinning Strategy For Observability}\label{section-openov}
In this section, a network-structure-based pinning control strategy will be developed so as to make a unobservable BN (\ref{equation-BN}) become observable. Although it seems illogical as system state is not available at any time instant, as stressed above, such pinning control strategy for observability paves the way for the sensors construction of large-scale BNs.

The pinning controlled BN (\ref{equation-BN}) can be presented as follows:
\begin{equation}\label{equation-pinningBN}
\left\{
\begin{aligned}
&{\bm x}_i(t+1)={\bm f}_i({\bm x}_{i_1}(t),{\bm x}_{i_2}(t),\cdots,{\bm x}_{i_{k_i}}(t)), ~i\in [1,n]\backslash\mathrm{P},\\
&{\bm x}_i(t+1)=\tilde{{\bm f}}_i ({\bm u}_i(t),{\bm x}_{i_1}(t),{\bm x}_{i_2}(t),\cdots,{\bm x}_{i_{k_i}}(t)), ~i\in \mathrm{P},\\
&{\bm y}_j(t)={\bm x}_j(t), ~j\in[1,p],
\end{aligned}
\right.
\end{equation}
where $\mathrm{P}\subseteq [1,n]$ is the pinned node set. For every $i\in\mathrm{P}$, $\tilde{{\bm f}}_i:\mathscr{B}^{k_i+1}\rightarrow \mathscr{B}$ is a logical function, and input ${\bm u}_i:\mathscr{B}^{n} \rightarrow \mathscr{B}$ is state feedback control written as
\begin{equation}
{\bm u}_i(t)={\bm g}_i({\bm x}_{1}(t),{\bm x}_{2}(t),\cdots,{\bm x}_{n}(t)), ~i\in\mathrm{P}.
\end{equation}
Moreover, function $\tilde{{\bm f}}_i$, $i\in\mathrm{P}$ is expressed as in \cite{Liff2016TNNLS1585}: $\tilde{{\bm f}}_i={\bm u}_i \oplus_i {\bm f}_i$.

Three configurations that we shall design in (\ref{equation-pinningBN}) are pinned node set $\mathrm{P}$, state feedback law ${\bm g}_i$, as well as binary logical operator $\oplus_i$.

Although Lemma \ref{lemma-basictheorem} is only sufficient, it suffices to design pinning controller, by which one can modify the adjacency relation of BN (\ref{equation-BN}) such that the resulting BN (\ref{equation-pinningBN}) satisfies Properties $P_1$ and $P_2$.

In what follows, for a unobservable BN (\ref{equation-BN}), we design the pinning controller to achieve its observability by the following two steps:
\begin{itemize}
  \item[(1)] Select the pinned node set $\mathrm{P}$. We print $p$ desired observed paths $\mathrm{O}_1, \mathrm{O}_2,\cdots,\mathrm{O}_p$, then determine the pinned node set $\mathrm{P}$ to collect the nodes corresponding to the vertices that do not satisfy the definition of observed paths.
  \item[(2)] Design state feedback control and logical couplings. For every pinned node $i\in\mathrm{P}$, we aim to design the logical operator $\oplus_i$ and the state feedback gain ${\bm g}_i$ such that BN (\ref{equation-pinningBN}) satisfies both Properties $P_1$ and $P_2$.
\end{itemize}

\subsection{Identifying pinned node set $\mathrm{P}$}
In this subsection, we devote to selecting the pinned node set $\mathrm{P}$. According to Lemma \ref{lemma-basictheorem}, we aim to decompose the network structure of pinning controlled BN (\ref{equation-pinningBN}) into $p$ disjoint observed paths. It indicates if some ``control effect" is inflicted on some nodes such that the resulting network structure can be decomposed into several disjoint observed paths, one can generate an observable BN (\ref{equation-pinningBN}).

We establish Algorithm \ref{algorithm-selectnode}, which can be implemented in time $O(n^2)$, to print $p$ desired observed paths and identify the pinned node set $\mathrm{P}$. The printed paths $\mathrm{O}_1,\mathrm{O}_2,\cdots,\mathrm{O}_p$ may not satisfy Definition \ref{definition-observedpath}, thus they are termed the desired observed paths here.

\begin{algorithm}[ht!]
\caption{Pinned Node Set Search}\label{algorithm-selectnode}
\begin{algorithmic}[1]
\Require Network structure $\mathrm{G}:=(\mathrm{V},\mathrm{E})$ of BN (\ref{equation-BN})
\Ensure Pinned node set $\mathrm{P}$ and $p$ desired paths $\mathrm{O}_1, \mathrm{O}_2,\cdots, \mathrm{O}_p$
\Procedure{SEARCH}{$\mathrm{P},\mathrm{O}_1, \mathrm{O}_2,\cdots, \mathrm{O}_p$}
\State Set $\mathrm{P}:=\emptyset$ (*$\mathrm{P}$ is a set*)
\State Color all vertices in $\mathrm{V}$ white
\For{$i=1$ to $p$}
\State Set $\mathrm{O}_i:=\emptyset$ (*$\mathrm{O}_i$ is a list*)
\State Append $v_i$ to the tail of $\mathrm{O}_i$
\State Color vertex $v_i$ black
\EndFor
\For{$j=1$ to $p$}
\State Consider the tail $x$ of $\mathrm{O}_j$
\While{the tail $x$ of $\mathrm{O}_j$ has a white in-neighbor $y$}
\State $x=$\Call{Append}{$j,x,y$}
\EndWhile
\EndFor
\While{$\mathrm{V}$ has a white vertex $w$}
\State Consider the tail $z$ of $\mathrm{O}_p$
\State $z=$\Call{Append}{$p,z,w$}
\EndWhile
\If{$\mathrm{P}=\emptyset$}
\State\Return ``BN (\ref{equation-BN}) is observable''
\Else
\State\Return $(\mathrm{P},\mathrm{O}_1, \mathrm{O}_2,\cdots, \mathrm{O}_p)$
\EndIf
\EndProcedure

\Function {Append}{$j,x,y$}
\If{$I(x)\neq\{y\}$}
\State Set $\mathrm{P}:=\mathrm{P} \cup \{j|v_j=x\}$
\EndIf
\State Append $y$ to the tail of $\mathrm{O}_j$
\State Color $y$ black
\State\Return $y$
\EndFunction
\end{algorithmic}
\end{algorithm}

The employing principle of Algorithm \ref{algorithm-selectnode} is to establish $p$ disjoint desired observed paths. The basic idea is to check whether the visiting vertex has a unique in-neighbor according to the Line $26$ of Algorithm \ref{algorithm-selectnode}. Once the desired observed paths have been printed, the nodes that correspond to the vertices violating Definition \ref{definition-observedpath} are recognized as the pinned nodes.

\subsection{Designing $\oplus_i$ and ${\bm g}_i$ for every $i\in\mathrm{P}$}
To proceed, for different types of pinned nodes in $\mathrm{P}$, we design the corresponding logical operators $\oplus_i:\mathscr{B}^2\rightarrow\mathscr{B}$ and Boolean functions ${\bm g}_i$ to modify the network structure of BN (\ref{equation-BN}) such that the conditions in Lemma \ref{lemma-basictheorem} are satisfied. Accordingly, we split the set $\mathrm{P}$ into three categories $\mathrm{P}:=\mathrm{P}_1\cup\mathrm{P}_2\cup\mathrm{P}_3$ as indicated in Fig. \ref{fig-class}, where for node $i\in\mathrm{P}_1$ we should both delete and add incoming edges; for node $i\in\mathrm{P}_2$ we need to delete some edges ending with it; and for $i\in\mathrm{P}_3$ we need to add an edge from another node to it.
\begin{figure}[ht!]
\centering
\includegraphics[width=0.3\textwidth=0.5]{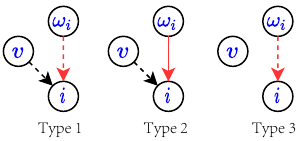}
\caption{Categorization on the pinned nodes is presented, where vertex $v_{\omega_i}$ is the desired in-neighbor of vertex $v_i$ on the desired observed paths and vertex $v$ simply stands for another in-neighbor of vertex $v_i$. More precisely, black dashed edges correspond to the edges needing to be deleted, and red dashed (resp., red solid) edges represent the edges that need to be added (resp., unchanged).}\label{fig-class}
\end{figure}

\begin{lemma}\label{propostion-solution}
Given logical function ${\bm f}:\mathscr{B}^n\rightarrow\mathscr{B}$, whose structure matrix is denoted as $L_{{\bm f}}\in\mathscr{L}_{2\times 2^{n}}$. Variable ${\bm x}_{\kappa}$ is a unique functional variable of function ${\bm f}$ if and only if structure matrix $L_{{\bm f}}\in\mathscr{L}_{2\times 2^n}$ satisfies that
\begin{equation}\label{equation-beforeset}
L_{{\bm f}}\ltimes \mathrm{W}_{[2,2^{\kappa-1}]}=\hat{A}\otimes {\bf 1}^\top_{2^{n-1}},
\end{equation}
where $\hat{A}=I_2$ or $\hat{A}=I'_2:=\delta_2[2,1]$.
\end{lemma}
\begin{proof}
Consider the multi-linear form $f$ of function ${\bm f}$. By using the STP of matrices, one has that
\begin{equation*}
\begin{aligned}
{f}(x_{1},x_{2},\cdots,x_{n})&=L_{\bm f}\ltimes_{j=1}^{n} x_j=L_{{\bm f}}\ltimes_{j=1}^{\kappa-1}x_j \ltimes x_{\kappa} \ltimes_{h=\kappa+1}^{n}x_j\\
&=L_{{\bm f}} \mathrm{W}_{[2,2^{\kappa-1}]}x_{\kappa}(\ltimes_{j=1}^{\kappa-1}x_j)(\ltimes_{j=\kappa+1}^{n}x_j).
\end{aligned}
\end{equation*}

As variables $x_j$, $j\neq\kappa$ are nonfunctional, then it holds that
$$ L_{{\bm f}} \mathrm{W}_{[2,2^{\kappa-1}]}\delta_2^1 \delta_{2^{n-1}}^j = L_{{\bm f}} \mathrm{W}_{[2,2^{\kappa-1}]}\delta_2^1 \delta_{2^{n-1}}^k$$
and
$$ L_{{\bm f}} \mathrm{W}_{[2,2^{\kappa-1}]}\delta_2^2 \delta_{2^{n-1}}^j = L_{{\bm f}} \mathrm{W}_{[2,2^{\kappa-1}]}\delta_2^2 \delta_{2^{n-1}}^k$$
for any $j\neq \kappa$. Furthermore, since variable $x_\kappa$ is functional, one has that
$$ L_{{\bm f}} \mathrm{W}_{[2,2^{\kappa-1}]}\delta_2^1 = [1,0]^\top \otimes {\bf 1}^\top_{2^{n-1}},~ L_{{\bm f}} \mathrm{W}_{[2,2^{\kappa-1}]}\delta_2^2 = [0,1]^\top \otimes {\bf 1}^\top_{2^{n-1}},$$
or
$$ L_{{\bm f}} \mathrm{W}_{[2,2^{\kappa-1}]}\delta_2^1 = [0,1]^\top \otimes {\bf 1}^\top_{2^{n-1}},~ L_{{\bm f}} \mathrm{W}_{[2,2^{\kappa-1}]}\delta_2^2 = [1,0]^\top \otimes {\bf 1}^\top_{2^{n-1}}.$$

Therefore, it follows that $L_{{\bm f}}\mathrm{W}_{[2,2^{\kappa-1}]}=I_2\otimes {\bf 1}^\top_{2^{n-1}}$ or $L_{{\bm f}}\mathrm{W}_{[2,2^{\kappa-1}]}=I_2^\top\otimes {\bf 1}^\top_{2^{n-1}}$ hold for any logical function ${\bm f}$ with unique functional variable ${\bm x}_\kappa$, and vice versa.
\end{proof}

\begin{lemma}\label{lemma-solution}
For every $\hat{A}\in\mathscr{L}_{2\times 2}$ and $\kappa\in[1,n]$, there exists a unique matrix $A\in\mathscr{L}_{2\times 2^{n}}$ satisfying equation (\ref{equation-beforeset}), i.e., $A\ltimes \mathrm{W}_{[2,2^{\kappa-1}]}=\hat{A}\otimes {\bf 1}^\top_{2^{n-1}}$.
\end{lemma}
\begin{proof}
Following the definition of swap matrix $\mathrm{W}_{[2,2^{\kappa-1}]}$, one has that
$\mathrm{W}_{[2,2^{\kappa-1}]}\ltimes \mathrm{W}^\top_{[2,2^{\kappa-1}]}=I_{2^{\kappa}}$
and
$\mathrm{W}^\top_{[2,2^{\kappa-1}]}\in\mathscr{L}_{2^{\kappa}\times 2^{\kappa}}$. Therefore, if we multiply $\mathrm{W}^\top_{[2,2^{\kappa-1}]}$ on the both side of equation (\ref{equation-beforeset}), one can imply that
\begin{equation}\label{equation-ba}
A=(\hat{A}\otimes {\bf 1}^\top_{2^{n-1}}) \ltimes \mathrm{W}^\top_{[2,2^{\kappa-1}]}\in\mathscr{L}_{2 \times 2^n}
\end{equation}
holds for all $\hat{A}\in\mathscr{L}_{2\times 2}$. This completes the proof of Lemma \ref{lemma-solution}.
\end{proof}

Let $x_i(t):=\rho({\bm x}_i(t))$, $u_j(t):=\rho({\bm u}_j(t))$ and $y_k(t):=\rho({\bm y}_k(t))$. Utilizing the STP of matrices and its properties introduced in Section \ref{section-preliminary}, we can derive the multilinear form of nodal dynamics (\ref{equation-BN}) as
\begin{equation}
\left\{\begin{aligned}
&x_i(t+1)=L_{{\bm f}_i}\ltimes_{\alpha=1}^{k_i} x_{i_\alpha}(t),~i\in [1,n],\\
&y_j(t)=x_j(t),~j\in[1,p],
\end{aligned}\right.
\end{equation}
where $L_{{\bm f}_i}\in\mathscr{L}_{2\times 2^{k_i}}$ is the structure matrix of logical function ${\bm f}_i$.

For $i\in\mathrm{P}_1$, we denote by $v_{\varpi_i}$ the desired in-neighbor of vertex $v_i$ on the desired observed paths returned by Algorithm \ref{algorithm-selectnode}. In this case, state feedback control ${\bm u}_i$ is given as
$${\bm u}_i(t)={\bm g}_i({\bm x}_{i_1}(t),{\bm x}_{i_2}(t),\cdots,{\bm x}_{i_{k_i}}(t),{\bm x}_{\varpi_i}(t)).$$

Let $\varpi_i$ be the $\iota_i$-th number of set $\{i_1,i_2,\cdots,i_{k_i}\}$ ordered increasingly. Then, defining $\bar{\ltimes}_{\alpha=1}^{k_i} x_{i_\alpha}(t)=(\ltimes_{\alpha=1}^{\iota_i-1} x_{i_\alpha}(t)) \ltimes x_{\varpi_i}(t) \ltimes (\ltimes_{\alpha=\iota_i}^{i_\alpha} x_{i_\alpha}(t))$, one can obtain that
\begin{equation}\label{equation-p1}
\begin{aligned}
x_i(t+1)&=M_{\oplus_i}u_i(t){f}_i(x_{i_1}(t),x_{i_2}(t),\cdots,x_{i_{k_i}}(t))\\
&=M_{\oplus_i}L_{{\bm g}_i} \bar{\ltimes}_{\alpha=1}^{k_i} x_{i_\alpha}(t) L_{{\bm f}_i}\ltimes_{\alpha=1}^{k_i} x_{i_\alpha}(t)\\
&=M_{\oplus_i}L_{{\bm g}_i} \bar{\ltimes}_{\alpha=1}^{k_i} x_{i_\alpha}(t) L_{{\bm f}_i}(I_{2^{\iota_i-1}}\otimes M_{d,2})\bar{\ltimes}_{\alpha=1}^{k_i} x_{i_\alpha}(t)\\
&=M_{\oplus_i}L_{{\bm g}_i}\left(I_{2^{k_i+1}}\otimes (L_{{\bm f}_i}(I_{2^{\iota_i-1}}\otimes  M_{d,2}))\right)\\
&~~~~~~~~~~~~~~~~~~~~~~~~~~~~~~~~~~~~M_{r,2^{k_i+1}}\bar{\ltimes}_{\alpha=1}^{k_i} x_{i_\alpha}(t),
\end{aligned}
\end{equation}
where $M_{\oplus_i}\in\mathscr{L}_{2\times 4}$ and $L_{{\bm g}_i}\in\mathscr{L}_{2\times 2^{k_i+1}}$ are respectively the unknown structure matrices corresponding to logical coupling $\oplus_i$ and logical function ${\bm g}_i$ for each $i\in\mathrm{P}_1$.

For vertex $i\in\mathrm{P}_2$, the state feedback control ${\bm u}_i$ only depends on the states of local in-neighbors as
$${\bm u}_i(t)={\bm g}_i({\bm x}_{i_1}(t),{\bm x}_{i_2}(t),\cdots,{\bm x}_{i_{k_i}}(t)).$$
Consequently, we have that
\begin{equation*}
\begin{aligned}
x_i(t+1)&=M_{\oplus_i}u_i(t) f_i(x_{i_1}(t),x_{i_2}(t),\cdots,x_{i_{k_i}}(t))\\
&=M_{\oplus_i}L_{{\bm g}_i}(\ltimes_{\alpha=1}^{k_i} x_{i_\alpha}(t)) L_{{\bm f}_i}(\ltimes_{\alpha=1}^{k_i} x_{i_\alpha}(t))\\
&=M_{\oplus_i}L_{{\bm g}_i}\left(I_{2^{k_i}}\otimes L_{{\bm f}_i}\right)M_{r,2^{k_i}} (\ltimes_{\alpha=1}^{k_i} x_{i_\alpha}(t)).
\end{aligned}
\end{equation*}

Besides, for $i\in\mathrm{P}_3$, one can directly assign $\oplus_i=\wedge$ and ${\bm g}_i(t)={\bm x}_{\varpi_i}(t)$ if ${\bm f}_i=1$; otherwise, let $\oplus_i=\vee$ and ${\bm g}_i(t)={\bm x}_{\varpi_i}(t)$.

Finally, we respectively calculate the logical matrices $M_{\oplus_i}\in\mathscr{L}_{2\times 4}$, $L_{{\bm g}_i}\in\mathscr{L}_{2\times2^{k_i}}$, $i\in\mathrm{P}_1$ and $M_{\oplus_i}\in\mathscr{L}_{2\times 4}$, $L_{{\bm g}_i}\in\mathscr{L}_{2\times2^{k_i+1}}$, $i\in\mathrm{P}_2$ from the following equations
\begin{equation}\label{equation-equations}
\left\{\begin{aligned}
&M_{\oplus_i}L_{{\bm g}_i}\left(I_{2^{k_i+1}}\otimes (L_{{\bm f}_i}(I_{2^{\iota_i-1}}\otimes M_{d,2}))\right)M_{r,2^{k_i+1}}\\
&~~~~~~~~~~~~~~~~~~~~~~~~~=(\hat{A}_i\otimes {\bf 1}^\top_{2^{k_i}}) \mathrm{W}^\top_{\left[2,2^{\iota_i-1}\right]},~i\in\mathrm{P}_1,\\
&M_{\oplus_i}L_{{\bm g}_i}\left(I_{2^{k_i}}\otimes L_{{\bm f}_i}\right)M_{r,2^{k_i}}\\
&~~~~~~~~~~~~~~~~~~~~~~~~~=(\hat{A}_i\otimes {\bf 1}^\top_{2^{k_i-1}}) \mathrm{W}^\top_{\left[2,2^{\iota_i-1}\right]},~i\in\mathrm{P}_2,
\end{aligned}\right.
\end{equation}
with matrix $\hat{A}_i\in \{I_2,I'_2\}$ for all $i\in \mathrm{P}_1 \cup \mathrm{P}_2$.

It should be stressed that \cite{Liff2016TNNLS1585} has proved that the latter equation in (\ref{equation-equations}) is solvable. Besides, if we regard $L_{{\bm f}_i} (I_{2^{\iota_i-1}} \otimes M_{d,2})$ as a whole, the former equation of (\ref{equation-equations}) is still in the same form as the latter one. Thus, (\ref{equation-equations}) is also solvable. Once the structure matrices $M_{\oplus_i}$ and $L_{{\bm g}_i}$ are solved, we can obtain their logical form by the inverse transformation of Lemma \ref{lemma-structurematrix}, given in \cite{chengdz2011springer}.

\begin{theorem}
The pinning controlled BN (\ref{equation-pinningBN}) is observable.
\end{theorem}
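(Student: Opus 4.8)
The plan is to prove observability by reducing everything to the sufficient structural criterion of Lemma~\ref{theorem-basictheorem}: I will show that the pinning controller designed in $Q_2$ forces the wiring digraph of the controlled augmented network (\ref{equation-case1-BN}) to decompose into the $p$ disjoint observed paths $O_1,\dots,O_p$ printed by Algorithm~\ref{algorithm-selectnode}, which cover every vertex of $V=\mathcal{X}\cup\mathcal{Y}$. Once this decomposition is certified, Properties $P_1$ and $P_2$ follow, Lemma~\ref{theorem-basictheorem} yields observability of (\ref{equation-case1-BN}), and the equivalence established earlier transfers it to (\ref{equation-pinningBN}).

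First I would verify the local effect of the controller at each pinned node. Fix $i\in\mathscr{P}$ and let $X_{\varpi_i}$ be its desired predecessor on the observed path through $X_i$. The design target is $\mathcal{N}_{\widetilde{f}_i}=\{\varpi_i\}$, i.e. $X_{\varpi_i}$ is the unique functional variable of the controlled function $\widetilde{f}_i$. By Lemma~\ref{propostion-solution} this is equivalent to the structure matrix of $\widetilde{f}_i$ having the normal form $\widehat{A}\otimes{\bf 1}^\top$ after the swap $W_{[2,2^{\iota_i-1}]}$, which is exactly the right-hand side of the logical equations (\ref{equation-equations}). For each of the three types I would substitute the corresponding multilinear expansion of $x_i(t+1)$ (the expansion (\ref{equation-p1}) for $\mathscr{P}_1$, and its analogues for $\mathscr{P}_2$ and $\mathscr{P}_3$) so that the unknowns $M_{\oplus_i}$ and $L_{g_i}$ appear linearly, and then invoke Lemma~\ref{theorem-solution} together with the solvability of (\ref{equation-equations}) to obtain admissible $\oplus_i$ and $g_i$. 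This guarantees that the predecessor edge $X_{\varpi_i}\to X_i$ is present while every edge from the remaining in-neighbours is rendered nonfunctional, so that the controlled node acquires precisely the in-neighbour dictated by its observed path.

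With all pinned nodes corrected, the second step is to read off the global structure. Every vertex lies on exactly one of $O_1,\dots,O_p$ (the covering guarantee of Algorithm~\ref{algorithm-selectnode}), each path terminates at a directly observable vertex, and by the first step every non-initial vertex $\overline{X}_{i_{k+1}}$ on a path satisfies $\mathcal{N}_{\overline{f}_{i_{k+1}}}=\{i_k\}$; hence each $O_\ell$ is a genuine observed path in the sense of Definition~\ref{definition-observedpath}. Property $P_1$ is then immediate, since any non-directly-observable vertex is non-terminal on its path and so its path-successor is a distinct vertex having it as unique in-neighbour. For $P_2$ I would argue that any cycle $C$ consisting solely of non-directly-observable vertices must contain a vertex whose path-successor leaves $C$: starting from any $v\in C$ and following path-successors stays on one finite path and must eventually reach its directly-observable terminal, which cannot lie in $C$; the last $C$-vertex before leaving is then the unique in-neighbour of a vertex outside $C$, giving $P_2$. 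Applying Lemma~\ref{theorem-basictheorem} completes the argument.

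I expect the main obstacle to be the first step, namely certifying that the equations (\ref{equation-equations}) derived from the three multilinear expansions are simultaneously solvable for $M_{\oplus_i}\in\mathscr{L}_{2\times 4}$ and $L_{g_i}$, and that their solution really realizes the unique-functional-variable normal form of Lemma~\ref{propostion-solution} for each type, in particular retaining the predecessor edge while cancelling all the others. The $P_2$ verification is conceptually delicate but short once the observed-path decomposition is in hand; the bookkeeping that distinguishes the ``delete'', ``add'', and ``delete-and-add'' cases of $\mathscr{P}_1,\mathscr{P}_2,\mathscr{P}_3$ is where the careful computation lies.
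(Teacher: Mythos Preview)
Your proposal is correct and follows essentially the same approach as the paper: both argue that the pinning controller renders the wiring digraph of the controlled network a disjoint union of the observed paths $O_1,\dots,O_p$, so that Properties $P_1$ and $P_2$ hold and Lemma~\ref{theorem-basictheorem} applies. The only difference is cosmetic---the paper cites \cite{Margaliot2019TAC2727,Margaliot2019IEEECSL210} for the implication ``observed-path decomposition $\Rightarrow$ $P_1$ and $P_2$'', whereas you verify $P_1$ and $P_2$ directly and spell out the local correctness of the controller at each pinned node via Lemmas~\ref{propostion-solution} and~\ref{theorem-solution}.
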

\begin{proof}
According to equation (\ref{equation-equations}), we know that, the unique in-neighbor of vertex $v_i$ in the network structure of the pinning controlled BN (\ref{equation-pinningBN}), $i\in\mathrm{P}_1\cup\mathrm{P}_2$ is vertex $v_{\varpi_i}$, which is the desired in-neighbor on the desired observed paths $\mathrm{O}_1,\mathrm{O}_2,\cdots,\mathrm{O}_p$. This claim is supported by the following derivation:
$$\begin{aligned}
&(\hat{A}_i\otimes {\bf 1}^\top_{2^{k_i}}) \mathrm{W}^\top_{\left[2,2^{\iota_i-1}\right]} \bar{\ltimes}_{\alpha=1}^{k_i} x_{i_\alpha}(t)=(\hat{A}_i\otimes {\bf 1}^\top_{2^{k_i}}) \mathrm{W}_{\left[2^{\iota_i-1},2\right]} \bar{\ltimes}_{\alpha=1}^{k_i} x_{i_\alpha}(t)\\
&=(\hat{A}_i\otimes {\bf 1}^\top_{2^{k_i}}) x_{\varpi_i}(t) \ltimes_{\alpha=1}^{k_i} x_{i_\alpha}(t)=\hat{A}_ix_{\varpi_i}(t)
\end{aligned}$$
and
$$\begin{aligned}&(\hat{A}_i\otimes {\bf 1}^\top_{2^{k_i-1}}) \mathrm{W}^\top_{\left[2,2^{\iota_i-1}\right]} \ltimes_{\alpha=1}^{k_i} x_{i_\alpha}(t)\\
&=(\hat{A}_i\otimes {\bf 1}^\top_{2^{k_i-1}}) x_{\varpi_i}(t) (\ltimes_{\alpha=1}^{\varpi_i-1} x_{i_\alpha}(t)) (\ltimes_{\alpha=\varpi_i+1}^{k_i} x_{i_\alpha}(t))=\hat{A}_ix_{\varpi_i}(t)\end{aligned}$$

By Lemma \ref{lemma-basictheorem}, the pinning controlled BN (\ref{equation-pinningBN}) has both Properties $P_1$ and $P_2$. One, therefore, can conclude that the pinning controlled BN (\ref{equation-pinningBN}) is observable.
\end{proof}

\section{Construction of Sensors}\label{section-sensors}
The above pinning control approach indeed contributes to the sensors design of large-scale BNs, thus it eliminates the illogical assumption that the states of all system nodes should be accessible. In this section, we shall utilize the configurations $\oplus_i$, ${\bm g}_i$ and $\mathrm{P}$ designed in the above pinning controllers to construct the efficient sensors.

\subsection{Sensors Construction}
For every node $i\in\mathrm{P}$, we assume that its corresponding vertex locates on the observed path $\mathrm{O}_{\pi_i}$ and needs $\lambda_{i}$ steps to pass the information of $v_i$ to the directly observable vertex $v_{\pi_i}$ in network structure $\mathrm{G}$. Consequently, one can establish the following sensors:
\begin{equation}\label{equation-finalobserver}
\left\{\begin{aligned}
&\tilde{{\bm y}}_i(t) = {\bm g}_i({\bm x}_{i_1}(t),{\bm x}_{i_2}(t),\cdots,{\bm x}_{i_{k_i}}(t), {\bm x}_{\varpi_i}(t)),~i\in\mathrm{P}_1,\\
&\tilde{{\bm y}}_i(t) = {\bm g}_i({\bm x}_{i_1}(t),{\bm x}_{i_2}(t),\cdots,{\bm x}_{i_{k_i}}(t)),~i\in\mathrm{P}_2,\\
&\tilde{{\bm y}}_i(t) = {\bm x}_{\varpi_i}(t),~i\in\mathrm{P}_3.
\end{aligned}\right.
\end{equation}

On the basis of original outputs ${\bm y}_j$, $j\in[1,p]$, and sensors (\ref{equation-finalobserver}), the initial system state can be deduced from the output observations. We suppose that the desired observed path from vertex $v_i\in\mathrm{O}_{\pi_i}\backslash\{v_j|j\in\mathrm{P}_3\}$ to $v_{\pi_i}$ goes through $o_i$ vertices in the set $\mathrm{P}_1\cup\mathrm{P}_2$. Sequence $v_iv_{\xi^{\pi_i}_{\lambda_i}} v_{\xi^{\pi_i}_{\lambda_i-1}} \cdots v_{\xi^{\pi_i}_{1}}(:=v_{\pi_i})$ is the part of desired observed path from vertex $v_i$ to directly observable vertex $v_{\pi_i}$.

For the case of $o_i=0$, we can compute the initial state of node $i$ as ${\bm x}_i(0)={\bm f}_{\xi^{\pi_i}_{\lambda_i}} \circ {\bm f}_{\xi^{\pi_i}_{\lambda_i-1}} \circ \cdots\circ {\bm f}_{\xi^{\pi_i}_{1}}({\bm x}_{\pi_i}(\lambda_i))$\footnote{For two functions ${\bm f}^1:\mathscr{B}\rightarrow\mathscr{B}$ and ${\bm f}^2:\mathscr{B}\rightarrow\mathscr{B}$, composite function ${\bm f}^1\circ{\bm f}^2$ is defined as ${\bm f}^1\circ{\bm f}^2(\eta)={\bm f}^1({\bm f}^2(\eta))$ for any $\eta\in\mathscr{B}$.};

Next, we consider the scene of $o_i=1$, and a general situation with $o_i>1$ can be regarded as an extension. Without loss of generality, we assume that the desired observed path from $v_i$ to $v_{\pi_i}$ goes through the vertex $v_{\epsilon_i}$, $\epsilon_i\in\mathrm{P}_1\cup\mathrm{P}_2$, which is the out-neighbor of $v_i$ on the printed path. Thus, the initial state of node $i$ can be calculated as
$${\bm x}_i(0)=({\bm g}_{\epsilon_i}\oplus_{\epsilon_i}{\bm f}_{\epsilon_i} )\circ {\bm f}_{\xi^{\pi_i}_{\lambda_i-1}} \circ \cdots\circ {\bm f}_{\xi^{\pi_i}_{1}}({\bm x}_{\pi_i}(\lambda_i)).$$
Since, the state of every node can always be passed to the node corresponding to certain directly observable vertex like a shift register, we therefore can determine the initial state of any node. It enables us to reconstruct the entire state trajectory of BN (\ref{equation-BN}). Consequently, we can conclude the following theorem.
\begin{theorem}\label{theorem-observerdesign}
Via sensors (\ref{equation-finalobserver}), BN (\ref{equation-BN}) is observable.
\end{theorem}

\begin{remark}
For BN (\ref{equation-BN}) with $p=0$, we begin the procedure by appointing some nodes to be directly observed by sensors, and regard this part of output observations as original outputs.
\end{remark}

\subsection{Comparison with Traditional Approaches}\label{section-discussion}
One striking merit of this pinning control strategy is the reduction of computational complexity. Generally for the analysis and control of observability in BNs (see, e.g., \cite{chengdz2018scl22,chengdz2016scl76,guoyq2018tnnls6402}), the time complexity is $O(2^{2n})$ at least, which is computationally heavy to deal with the large-scale BNs. By our approach, the time complexity to determine the pinned nodes (cf. Algorithm \ref{algorithm-selectnode}) is linear in node number $n$ as $O(n^2)$; and for the part of designing feedback controllers and logical couplings we only address the $2\times 2^{d+1}$-dimensional logical matrix at most, where $d$ is the largest in-degree of vertices corresponding to pinned nodes. Thus, the total time complexity of this method is bounded by $O(n^2+n2^{d})$, which is lower than $O(2^{2n})$ since $d \ll n$. Moreover, some results (cf. \cite{jeong2000nature,jeong2001nature}) reveal that the large-scale biological networks are always sparsely connected, i.e., $d \ll n$.

In addition, by this method, it takes $O(n^2)$ time to identify the pinned nodes by the network structure $\mathrm{G}$ as mentioned above while the approaches in \cite{Liff2016TNNLS1585,liff2018tnnls} are bounded by $O(2^{2n})$. Additionally, the traditional pinning methods may inject the control inputs on all nodes. This case can be avoided by our method and even empirically pins the less nodes in most of scenes.

Compared with \cite{Margaliot2019TAC2727} and \cite{Margaliot2019IEEECSL210} that design the optimal and suboptimal sensors with respect to the number, we provide an alternative pinning observability approach to design the efficient sensors. In comparison with the observers design in \cite{Valcher2012TAC1390,zhangzh2020tac,yangjq2020tcns}, instead of the $2^n\times2^n$-dimensional network transition matrix $L$, the design of sensors here only adopts the $2\times 2^{d+1}$-dimensional logical matrix.

\section{Two Simulation Examples}\label{section-example}
\subsection{D. Melanogaster Segmentation Polarity Gene Network}
In the first subsection, we shall briefly study the D. melanogaster segmentation polarity gene network, which was modeled in \cite{xiao2007impact} and studied in \cite{Liff2016TNNLS1585}; see \cite{xiao2007impact} for a survey.

In D. melanogaster segmentation polarity gene network, six genes respectively represent the mRNA or protein: $\text{wg1}:{\bm x}_1$, $\text{wg2}:{\bm x}_2$ $\text{wg3}:{\bm x}_3$, $\text{wg4}:{\bm x}_4$, $\text{PTC1}:{\bm x}_5$ and $\text{PTC2}:{\bm x}_6$. The logical dynamics for the above six variables described in the following, and three outputs have also been given below
\begin{equation}\label{equation-exa1-nodedynamics}
\left\{\begin{aligned}
&{\bm x}_1(t+1)={\bm x}_1(t) \wedge \overline{{\bm x}_2(t)} \wedge \overline{{\bm x}_4(t)},\\
&{\bm x}_2(t+1)=\overline{{\bm x}_1(t)} \wedge \overline{{\bm x}_2(t)} \wedge \overline{{\bm x}_3(t)},\\
&{\bm x}_3(t+1)={\bm x}_1(t) \vee {\bm x}_3(t),~{\bm x}_4(t+1)={\bm x}_2(t) \vee {\bm x}_4(t),\\
&{\bm x}_5(t+1)=(\overline{{\bm x}_2(t)} \wedge \overline{{\bm x}_4(t)}) \vee ({\bm x}_5(t) \wedge \overline{{\bm x}_1(t)} \wedge \overline{{\bm x}_3(t)}),\\
&{\bm x}_6(t+1)=(\overline{{\bm x}_1(t)} \wedge \overline{{\bm x}_3(t)}) \vee ({\bm x}_6(t) \wedge \overline{{\bm x}_2(t)} \wedge \overline{{\bm x}_4(t)}),\\
&{\bm y}_1(t)={\bm x}_4(t),~{\bm y}_2(t)={\bm x}_5(t),~~{\bm y}_3(t)={\bm x}_6(t).
\end{aligned}\right.
\end{equation}

According to (\ref{equation-exa1-nodedynamics}), one can establish its network structure $\mathrm{G}$ as in Fig. \ref{fig-exa1-wringgraph}. While vertices $v_4$, $v_5$ and $v_6$ are directly observable vertices, there is a self loop on each vertex (see Fig. \ref{fig-exa1-wringgraph}). Thus, BN (\ref{equation-exa1-nodedynamics}) does not satisfy Properties $P_1$ and $P_2$, and Lemma \ref{lemma-basictheorem} cannot be used to determine its observability.
\begin{figure}[H]
\centering
\includegraphics[width=0.3\textwidth=0.5]{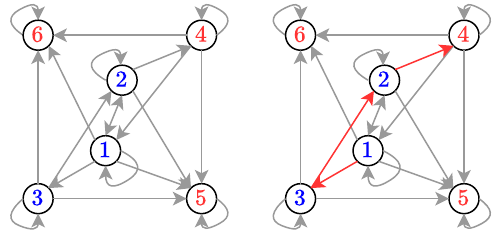}
\caption{Network structure of D. melanogaster segmentation polarity gene network (\ref{equation-exa1-nodedynamics}) is presented, where vertices $v_4$, $v_5$ and $v_6$ are directly observable vertices. The left subgraph is the original network structure, while the three red paths in the right subgraph are the desired observed paths provided by Algorithm \ref{algorithm-selectnode}.}\label{fig-exa1-wringgraph}
\end{figure}

Hence, we draw the state transition graph of this network in Fig. \ref{fig-6nodes-STG} and use [\cite{Margaliot2013Observability2351}, Th. 5] to check its observability, that is, whether its state transition graph has two separate cycles assigned with the same output sequence or contains a vertex having two different in-neighbors with the same output. Obviously, we can find the self loops $(101010)$ and $(001010)$, both of which correspond to outputs ${\bm y}_1=0$, ${\bm y}_2=1$ and ${\bm y}_3=0$ in the state transition graph. Thus, BN (\ref{equation-exa1-nodedynamics}) is not observable, and we devote to designing the pinning controller to make it be observable.
\begin{figure}[!ht]
\centering
\includegraphics[width=0.35\textwidth=0.5]{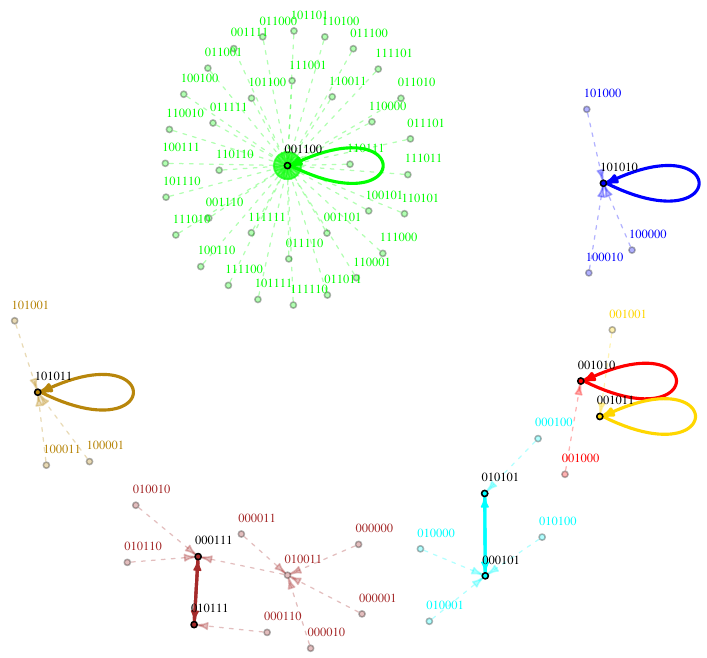}
\caption{State transition graph of D. melanogaster segmentation polarity gene network (\ref{equation-exa1-nodedynamics}) is drawn with seven different attractors.}\label{fig-6nodes-STG}
\end{figure}

To this end, we call Algorithm \ref{algorithm-selectnode} on the digraph $\mathrm{G}$, and three desired observed paths are generated as $\mathrm{O}_1: v_1v_3v_2v_4$, $\mathrm{O}_2: v_5$ and $\mathrm{O}_3: v_6$, which are described in the right subgraph of Fig. \ref{fig-exa1-wringgraph}. By Definition \ref{definition-observedpath}, one can select the pinned node set $\mathrm{P}=\mathrm{P}_2=\{2,3,4\}$.

Consider node $2\in\mathrm{P}_2$ and the second equation of (\ref{equation-equations}). Since its nodal dynamics is ${\bm f}_{2}(t)=\overline{{\bm x}_1(t)} \wedge \overline{{\bm x}_2(t)} \wedge \overline{{\bm x}_3(t)}$, one can calculate that its structure matrix is $L_{{\bm f}_2}=\delta_2[2,2,2,2,2,2,2,1]$. Again, vertex $v_2$ belongs to the observed path $\mathrm{O}_1$ and its desired in-neighbor is $v_3$, thus $\iota_2=3$ and $k_2=3$. Letting $\hat{A}=I_2$, one can solve that $M_{\oplus_{2}}=\delta_2[1,1,1,2]$ and $L_{{\bm g}_2}=\delta_2[1,2,2,1,1,2,2,1]$, which respectively correspond to $\oplus_2=\vee$ and ${\bm g}_2(t)=\overline{{\bm x}_3(t)}$. In a similar manner, we can obtain that $M_{\oplus_{3}}=\delta_2[2,1,1,2]$ and $L_{{\bm g}_3}=\delta_2[2,2,1,2]$, $M_{\oplus_{4}}=\delta_2[2,1,1,2]$ and $L_{{\bm g}_4}=\delta_2[2,2,1,2]$. Accordingly, one can obtain their logical form as $\oplus_{3}=\oplus_{4}=\bar{\vee}$, ${\bm g}_3(t)=\overline{{\bm x}_3(t) \rightarrow {\bm x}_1(t)}$, and ${\bm g}_4(t)=\overline{{\bm x}_4(t) \rightarrow {\bm x}_2(t)}$.

Therefore, by Theorem \ref{theorem-observerdesign}, sensors can be designed as follows:
\begin{equation}
\left\{\begin{aligned}
&{\bm y}_1(t)={\bm x}_4(t),~{\bm y}_2(t)={\bm x}_5(t),~~{\bm y}_3(t)={\bm x}_6(t),\\
&\tilde{{\bm y}}_1(t)=\overline{{\bm x}_3(t)},~\tilde{{\bm y}}_2(t)=\overline{{\bm x}_3(t) \rightarrow {\bm x}_1(t)},\\
&\tilde{{\bm y}}_3(t)=\overline{{\bm x}_4(t) \rightarrow {\bm x}_2(t)}.
\end{aligned}\right.
\end{equation}

\subsection{T-cell Receptor Kinetics Network}
In the second part, we consider a reduced BN called T-cell receptor kinetics with $37$ state nodes and $3$ input nodes, which was originally modeled in \cite{klamt2006methodology}. The detailed logical reactions are presented in (\ref{equation-BN-exa2}). Again, according to its nodal dynamics equation (\ref{equation-BN-exa2}), we can draw its network structure $\mathrm{G}$ as in Fig. \ref{fig-exa2-wringdigraph}. Even more, the original sensors are ${\bm y}_1(t)={\bm x}_1(t)$, ${\bm y}_2(t)={\bm x}_{22}(t)$, and ${\bm y}_3(t)={\bm x}_{25}(t)$, which respectively observe the values of nodes AP1, LAT and NFAT.

\begin{figure*}[!ht]
\centering
\begin{equation}\label{equation-BN-exa2}
\begin{array}{ll}
\text{CD8}: {\bm u}_{1}, \text{CD45}: {\bm u}_2, \text{TCRlig}: {\bm u}_3,&\text{AP1}: {\bm x}_1(\ast)={\bm x}_9\wedge{\bm x}_{18}\\
\text{Ca}: {\bm x}_2(\ast)={\bm x}_{14},&\text{Calcin}: {\bm x}_3(\ast)={\bm x}_2, \\
\text{cCbl}: {\bm x}_4(\ast)={\bm x}_{37},&\text{CRE}: {\bm x}_5(\ast)={\bm x}_6, \\
\text{CREB}: {\bm x}_6(\ast)={\bm x}_{32},&\text{DAG}: {\bm x}_7(\ast)={\bm x}_{25}, \\
\text{ERK}: {\bm x}_8(\ast)={\bm x}_{21},&\text{Fos}: {\bm x}_9(\ast)={\bm x}_8, \\
\text{Fyn}: {\bm x}_{10}(\ast)=({\bm x}_{20}\wedge{\bm u}_2(t))\vee({\bm x}_{35}(t)\wedge{\bm u}_2),&\text{Gads}: {\bm x}_{11}(\ast)={\bm x}_{19},\\
\text{Grb2Sos}: {\bm x}_{12}(\ast)={\bm x}_{19}, & \text{IKKbeta}: {\bm x}_{13}(\ast)={\bm x}_{24},\\
\text{IP3}: {\bm x}_{14}(\ast)={\bm x}_{25}, & \text{Itk}: {\bm x}_{15}(\ast)={\bm x}_{34}\wedge{\bm x}_{37},\\
\text{IkB}: {\bm x}_{16}(\ast)=\overline{{\bm x}_{13}}, & \text{JNK}: {\bm x}_{17}(\ast)={\bm x}_{33},\\
\text{Jun}: {\bm x}_{18}(\ast)={\bm x}_{17}, & \text{LAT}: {\bm x}_{19}(\ast)={\bm x}_{37},\\
\text{Lck}: {\bm x}_{20}(\ast)=\overline{{\bm x}_{26}}\wedge {\bm u}_1 \wedge {\bm u}_2, &\text{MEK}: {\bm x}_{21}(\ast)={\bm x}_{28},\\
\text{NFAT}: {\bm x}_{22}(\ast)={\bm x}_3, & \text{NFkB}: {\bm x}_{23}(\ast)=\overline{{\bm x}_{16}},\\
\text{PKCth}: {\bm x}_{24}(\ast)={\bm x}_7, & \text{PLCg(act)}: {\bm x}_{25}(\ast)=({\bm x}_{15}\wedge {\bm x}_{27}\wedge {\bm x}_{34} \wedge {\bm x}_{37})\vee({\bm x}_{27}\wedge{\bm x}_{31}\wedge{\bm x}_{34}\wedge{\bm x}_{37})\\
\text{PAGCsk}: {\bm x}_{26}(\ast)={\bm x}_{10}\vee \overline{{\bm x}_{35}}, &\text{PLCg(bind)}: {\bm x}_{27}(\ast)={\bm x}_{19},\\
\text{Raf}: {\bm x}_{28}(\ast)={\bm x}_{29}, & \text{Ras}: {\bm x}_{29}(\ast)={\bm x}_{12}\vee{\bm x}_{14},\\
\text{RasGRP1}: {\bm x}_{30}(\ast)={\bm x}_7\wedge{\bm x}_{13}, & \text{Rlk}: {\bm x}_{31}(\ast)={\bm x}_{20},\\
\text{Rsk}: {\bm x}_{32}(\ast)={\bm x}_8, & \text{SEK}: {\bm x}_{33}(\ast)={\bm x}_{24},\\
\text{SLP76}: {\bm x}_{34}(\ast)={\bm x}_{11}, & \text{TCRbind}: {\bm x}_{35}(\ast)=\overline{{\bm x}_4}\wedge{\bm u}_3,\\
\text{TCRphos}: {\bm x}_{36}(\ast)={\bm x}_{10}\vee({\bm x}_{20}\wedge {\bm x}_{35}) & \text{ZAP70}: {\bm x}_{37}(\ast)=\overline{{\bm x}_4}\wedge{\bm x}_{20}\wedge{\bm x}_{36}.
\end{array}
\end{equation}
\hrulefill
\vspace*{1pt}
\end{figure*}

\begin{figure}[!ht]
\centering
\includegraphics[width=0.48\textwidth=0.5]{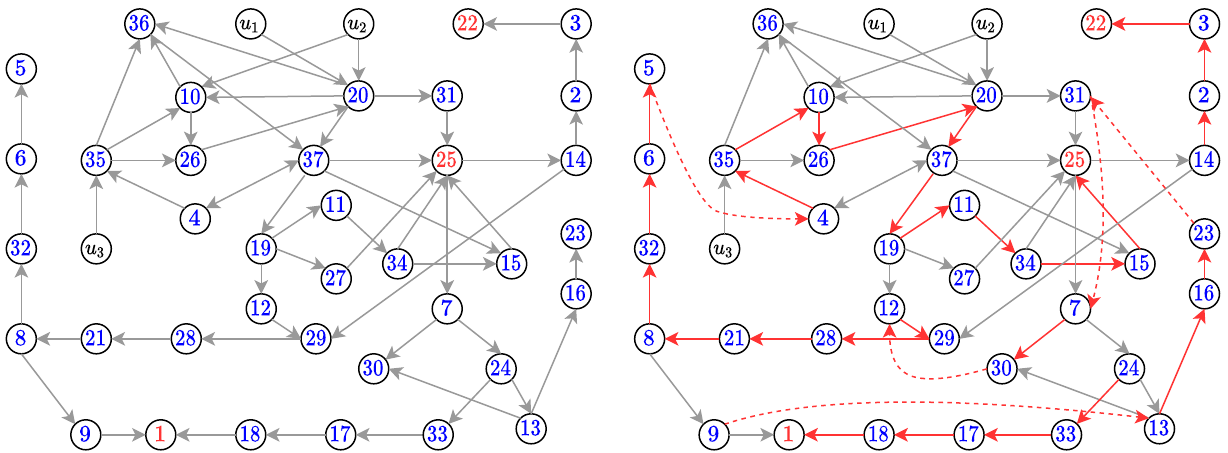}
\caption{Network structure $\mathrm{G}$ of T-cell receptor kinetics is presented, where the red vertices stand for the directly observable vertices. The left subgraph is the original network structure, and the right one is provided by Algorithm \ref{algorithm-selectnode}. More precisely, three red paths in the right subgraph are the desired observed paths, and the dashed edges therein do not exist in the original network structure $\mathrm{G}$.}\label{fig-exa2-wringdigraph}
\end{figure}

Since this BN does not satisfy both Properties $P_1$ and $P_2$, we cannot judge the observability of this BN from the viewpoint of network structure. Hence, we draw its state transition graph in Fig. \ref{fig-exa2-STG} by appointing $(1000110101110001000000000110000000010)$ as the initial state. Observing that the out-neighbors of vertices $A$ and $B$ in state transition graph are both $(0000000101000001000000000100001000111)$, BN (\ref{equation-BN-exa2}) is not observable according to [\cite{Margaliot2013Observability2351}, Th. 5].
\begin{figure}[htb]
\centering
\includegraphics[width=0.35\textwidth=0.5]{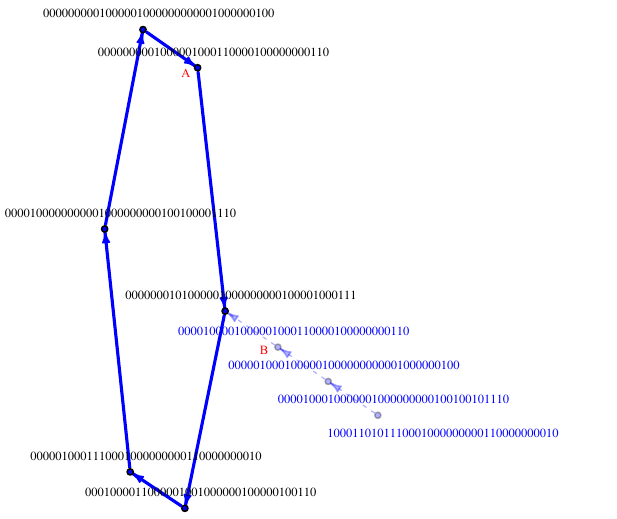}
\caption{The state transition graph of T-cell receptor kinetics from initial state $(1000110101110001000000000110000000010)^\top$ is presented.}\label{fig-exa2-STG}
\end{figure}

In what follows, we proceed to design the sensors under which BN (16) is observable. Algorithm \ref{algorithm-selectnode} yields three desired observed paths as $\mathrm{O}_1:v_{24}v_{33}v_{17}v_{18}v_{1}$, $\mathrm{O}_2:v_{14}v_{2}v_{3}v_{22}$ and $\mathrm{O}_3: v_9v_{13}v_{16}v_{23}v_{31}v_{7}v_{30}v_{12}v_{29}v_{28}v_{21}v_8v_{32}v_6v_5$ $v_4v_{35}v_{10}v_{26}v_{20}v_{37}v_{19}v_{11}v_{34}v_{15}v_{25}$, which are colored by red in the right subgraph of Fig. \ref{fig-exa2-wringdigraph}. From Definition \ref{definition-observedpath}, its pinned node set $\mathrm{P}$ is selected as $\mathrm{P}:=\{1,25,15,37,20,26,10,35,4,29,12,30,7,31,13\}$, which occupies approximately $40.54\%$ of all network nodes. Moreover, pinned node set $\mathrm{P}$ can be classified into two distinct types: $\mathrm{P}_1=\{12,7,31,13,4\}$ and $\mathrm{P}_2=\{1,25,15,37,20,26,10,35,29,30\}$.

Considering system node $7\in\mathrm{P}_1$, due to ${\bm x}_7(t+1)={\bm x}_{25}(t)$, one can claim that its structure matrix is $L_{{\bm f}_7}=I_2$. Again, vertex $v_7$ lies in the observed path $\mathrm{O}_3$ and its desired in-neighbor on the observed path $\mathrm{O}_3$ is $v_{31}$, thus one can utilize the first equation of (\ref{equation-equations}) to design the state feedback gain ${\bm g}_{7}$. Letting $\iota_{7}=2$ and $\hat{A}=I_2$, one can solve that $M_{\oplus_{7}}=\delta_2[2,1,1,2]$ and $L_{{\bm g}_7}=\delta_2[2,1,1,2]$, which correspond to dynamics $\oplus_7=\bar{\vee}$ and ${\bm g}_7=\overline{{\bm x}_{25}}\leftrightarrow {\bm x}_{31}$. Similarly, one can construct logical couplings $\oplus_i$ and state feedback gains ${\bm g}_i$ for other nodes in $\mathrm{P}_1$ as $\oplus_{12}=\bar{\vee}$, ${\bm g}_{12}={\bm x}_{30} \leftrightarrow \overline{{\bm x}_{19}}$, $\oplus_{13}=\bar{\vee}$, ${\bm g}_{13}={\bm x}_{9} \leftrightarrow \overline{{\bm x}_{24}}$, $\oplus_{31}=\bar{\vee}$, ${\bm g}_{31}={\bm x}_{23} \leftrightarrow \overline{{\bm x}_{20}}$, and $\oplus_4=\bar{\vee}$, ${\bm g}_4=\overline{{\bm x}_{37}}\leftrightarrow {\bm x}_{5}$.

Then, we analyze the system nodes in the set $\mathrm{P}_2$. For instance, considering vertex $v_{1}$, it locates on the desired observed path $\mathrm{O}_1$ and its desired in-neighbor is vertex $v_{18}$. Therefore, according to the second equation in (\ref{equation-equations}), letting $k_1=2$, $\iota_7=2$ and $\hat{A}_7=I_2$, one can calculate that $M_{\oplus_1}=\vee$ and $L_{{\bm g}_1}=\delta_2[1,2,1,2]$, which corresponds to $\oplus_1=\vee$ and ${\bm g}_{1}={\bm x}_{18}$. Moreover, for other nodes in $\mathrm{P}_2$, we can similarly compute as $\oplus_{15}=\wedge$, ${\bm g}_{15}={\bm x}_{25}$, $\oplus_{16}=\bar{\vee}$, ${\bm g}_{16}={\bm x}_{35}\leftrightarrow \overline{({\bm x}_{20}\vee{\bm u}_2)\vee({\bm x}_{35}\wedge{\bm u}_2)}$, $\oplus_{20}=\wedge$, ${\bm g}_{20}=\overline{{\bm x}_{26}}$, $\oplus_{25}=\bar{\vee}$, ${\bm g}_{25}={\bm x}_{15} \leftrightarrow \overline{({\bm x}_{15}\wedge{\bm x}_{27}\wedge{\bm x}_{34}\wedge{\bm x}_{37})\vee({\bm x}_{27}\wedge{\bm x}_{31}\wedge{\bm x}_{34}\wedge{\bm x}_{37})}$, $\oplus_{26}=\bar{\vee}$, ${\bm g}_{26}=\overline{\overline{{\bm x}_{35}}\rightarrow{\bm x}_{10}}$, $\oplus_{29}=\bar{\vee}$, ${\bm g}_{29}=\overline{{\bm x}_{14}\rightarrow{\bm x}_{12}}$, $\oplus_{30}=\wedge$, ${\bm g}_{30}={\bm x}_7$, and $\oplus_{37}=\wedge$, ${\bm g}_{37}={\bm x}_{20}$.

In BN (\ref{equation-BN-exa2}), about $40.54\%$ nodes are pinned. Additionally, during the whole process, our method only needs to address a $2\times2^5$-dimensional matrix instead of a $2^{37}\times 2^{37}$-dimensional one. By Theorem \ref{theorem-observerdesign}, sensors can be designed as
\begin{equation*}
\begin{aligned}
&{\bm y}_1(t)={\bm x}_1(t),~{\bm y}_2(t)={\bm x}_{22}(t),~{\bm y}_3(t)={\bm x}_{25}(t),\\
&\tilde{{\bm y}}_1=\overline{{\bm x}_{25}}\leftrightarrow {\bm x}_{31},~\tilde{{\bm y}}_2={\bm x}_{30} \leftrightarrow \overline{{\bm x}_{19}},~\tilde{{\bm y}}_3={\bm x}_{9} \leftrightarrow \overline{{\bm x}_{24}},\\
&\tilde{{\bm y}}_4={\bm x}_{23} \leftrightarrow \overline{{\bm x}_{20}},~\tilde{{\bm y}}_5=\overline{{\bm x}_{37}}\leftrightarrow {\bm x}_{5},~\tilde{{\bm y}}_6={\bm x}_{18},~\tilde{{\bm y}}_7={\bm x}_{25},\\
&\tilde{{\bm y}}_8={\bm x}_{35}\leftrightarrow \overline{({\bm x}_{20}\vee{\bm u}_2)\vee({\bm x}_{35}\wedge{\bm u}_2)},~\tilde{{\bm y}}_9=\overline{{\bm x}_{26}},\\
&\tilde{{\bm y}}_{10}={\bm x}_{15} \leftrightarrow \overline{({\bm x}_{15}\wedge{\bm x}_{27}\wedge{\bm x}_{34}\wedge{\bm x}_{37})\vee({\bm x}_{27}\wedge{\bm x}_{31}\wedge{\bm x}_{34}\wedge{\bm x}_{37})},\\
&\tilde{{\bm y}}_{11}=\overline{\overline{{\bm x}_{35}}\rightarrow{\bm x}_{10}},~\tilde{{\bm y}}_{12}=\overline{{\bm x}_{14}\rightarrow{\bm x}_{12}},~\tilde{{\bm y}}_{13}={\bm x}_7,~\tilde{{\bm y}}_{14}={\bm x}_{20}.
\end{aligned}
\end{equation*}

\section{Conclusion}\label{section-conclusion}
In this article, we have proposed a sensor design approach for large-scale BNs via pinning observability on the basis of the observability criteria in \cite{Margaliot2019TAC2727} and \cite{Margaliot2019IEEECSL210}. To this end, a novel pinning control strategy was developed for the observability of BNs. Based on the network structures, and without using ASSR framework, an algorithm was developed to produce a series of desired observed paths, and distinct types of pinned nodes were picked according to whether or not the corresponding vertices satisfy the definition of observed paths. Once pinned nodes have been picked, state feedback gain was designed for each of them. As a result, the extra sensors can be added as the state feedback control inputs in the pinning controller. The time complexity of our method was totally $O(n^2+n2^d)$, which is applicable for large-scale biological networks. Finally, the sensors design for D. melanogaster segmentation polarity gene network with $6$ state nodes and T-cell receptor kinetics with $37$ state nodes was successfully addressed.

\section*{Acknowledgment}
We are sincerely grateful to the experts in the graduate academic forum at the 2nd Workshop of TCCT Logical Control Systems for wealthy discussions and helpful suggestions.



\end{document}